\begin{document}
\title{Generalized Buneman pruning for inferring the most parsimonious multi-state phylogeny }
\author{Navodit Misra\inst{1}  \and Guy Blelloch\inst{2} \and R. Ravi \inst{3} \and Russell Schwartz\inst{4}}

\institute{Department of Physics, Carnegie Mellon University, Pittsburgh, USA. \email{nmisra@andrew.cmu.edu} \\
\and Computer Science Department,  Carnegie Mellon University, Pittsburgh, USA. \email{guyb@cs.cmu.edu}\\
 \and Tepper School of Business, Carnegie Mellon University, Pittsburgh, USA. \email{ravi@cmu.edu} \\
 \and Department of Biological Sciences, Carnegie Mellon University, Pittsburgh, USA. \email{russells@andrew.cmu.edu} }

\maketitle

\begin{abstract}
Accurate reconstruction of phylogenies remains a key challenge in evolutionary biology. Most biologically plausible formulations of the problem are formally NP-hard, with no known efficient solution. The standard in practice are fast heuristic methods that are empirically known to work very well in general, but can yield results arbitrarily far from optimal.  Practical exact methods, which yield exponential worst-case running times but generally much better times in practice, provide an important alternative.  We report progress in this direction by introducing a provably optimal method for the weighted multi-state maximum parsimony phylogeny problem. The method is based on generalizing the notion of the Buneman graph, a construction key to efficient exact methods for binary sequences, so as to apply to sequences with arbitrary finite numbers of states with arbitrary state transition weights. We implement an integer linear programming (ILP) method for the multi-state problem using this generalized Buneman graph and demonstrate that the resulting method is able to solve data sets that are intractable by prior exact methods in run times comparable with popular heuristics.  Our work provides the first method for provably optimal maximum parsimony phylogeny inference that is practical for multi-state data sets of more than a few characters.
\end{abstract}

\section*{Introduction}
One of the fundamental problems in computational biology is that of inferring evolutionary relationships between a set of observed amino acid sequences or taxa.  These evolutionary relationships are commonly represented by a tree (phylogeny) describing the descent of all observed taxa from a common ancestor, a reasonable model provided we are working with sequences over small enough regions or distant enough relationships that we can neglect recombination or other sources of reticulation~\cite{Posada}. Several criteria have been implemented in the literature for inferring phylogenies, of which one of the most popular is maximum parsimony (MP). Maximum parsimony defines the tree(s) with the fewest mutations as the optimum, generally a reasonable assumption for short time-scales or conserved sequences. It is a simple, non-parametric criterion, as opposed to common maximum likelihood models or various popular distance-based methods \cite{Felsenstein}. Nonetheless, MP is known to be NP-hard~\cite{Foulds} and practical implementations of MP are therefore generally based on heuristics which do not guarantee optimal solutions.

For sequences where each site or character is expressed over a set of discrete states, MP is equivalent to finding a minimum Steiner tree displaying the input taxa. For example, general DNA sequences can be expressed as strings of four nucleotide states and proteins as strings of 20 amino acid states.
Recently,  Sridhar {\it et al.} \cite{Sri} used integer linear programming to efficiently find global optima for the special case of sequences with binary characters, which are important when analyzing single nucleotide polymorphism (SNP) data. The solution was made tractable in practice in large part by a pruning scheme proposed by Buneman and extended by others~\cite{Buneman,Barthelemy,Bandelt}. The so-called Buneman graph $\mathcal{B}$ for a given set of observed strings is an induced sub-graph of the complete graph $\mathcal{G}$ (whose nodes represent all possible strings of mutations) such that $\mathcal{B}\subseteq\mathcal{G}$ still contains all distinct minimum Steiner trees for the observed data.   By finding the Buneman graph, one can often greatly restrict the space of possible solutions to the Steiner tree problem.  While there have been prior generalizations of the Buneman graph to non-binary characters~\cite{BandeltQM,Huber}, they do not provide any comparable guarantees usable for accelerating Steiner tree inference.

In this paper, we provide a new generalization of the definition of Buneman graph for any finite number of states that guarantees the resulting graph will contain all distinct minimum Steiner trees of the multi-state input set. Further, we allow transitions between different states to have independent weights. We then utilize the integer linear programming techniques developed in \cite{Sri} to find provably optimal solutions to the multi-state MP phylogeny problem. We validate our method on four specific data sets chosen to exhibit different levels of difficulty:  a set of nucleotide sequences from {\em Oryza rufipogon} \cite{rice}, a set of human mt-DNA sequences representing prehistoric settlements in Australia \cite{Ychr}, a set of HIV-1 reverse transcriptase amino acid sequences and, finally, a 500 taxa human mitochondrial DNA data set. We further compare the performance of our method, in terms of both accuracy and efficiency, with leading heuristics, {\tt PAUP*}~\cite{PAUP} and the {\tt pars} program of PHYLIP~\cite{Phylip}, showing our method to yield comparable and often far superior run times on non-trivial data sets.

\section*{Methods}

\subsection*{Notation \& Background}
Let $H$ be an input matrix that specifies a set of $N$ taxa $\chi$, over a set of $m$ characters $C=\{c_1, \ldots c_m\}$ such that $H_{ij}$ represents the $j^{th}$ character of the $i^{th}$ taxon. The taxa of $H$ represent the terminal nodes of the Steiner tree inference.  Further, let $n_k$ be the number of admissible states of the $k^{th}$ character $c_k$. The set of all possible states is the space $\mathcal{S} \equiv \{0,1,\ldots n_1-1\}\bigotimes \ldots \bigotimes \{0,1,\ldots n_m-1\}$. We will represent the $i^{th}$ character of any element $b\in S$, by $(b)_i$. The state space $\mathcal{S}$ can be represented as a graph $\mathcal{G}=(V_\mathcal{G},E_\mathcal{G})$ with the vertex set $V_\mathcal{G} = \mathcal{S}$ and edge set $E_\mathcal{G} = \{(u,v)|u,v \in \mathcal{S}, \sum_{c_p\in C}^{m}\delta[(u)_p,(v)_p]=1\}$, where $\delta[a,b] =0$ if $a=b$ and 1 otherwise. Furthermore, let $\bm{\alpha}=\{\alpha_{p}|c_p\in C \}$ be a set of weights, such that $\alpha_{p}[i,j]$ represents an edge length for a transition between states $i,j \in \{0,\ldots n_p -1\}$ for character $c_p$. We will assume that these lengths are positive (states that share zero edge length  are indistinguishable), symmetric in $i,j$ and satisfy the triangle inequality.
\begin{equation}
\label{eq:triangle}
\alpha_{p}[i,j] + \alpha_{p}[j,k] \geq \alpha_{p}[i,k] \quad \forall \quad  i,j,k\in\{0, \ldots n_p -1\}
\end{equation}
Non-negativity and symmetry are basic properties for any reasonable definition of length. If a particular triplet of states (say $i,j,k$) does not satisfy the triangle inequality in equation~\ref{eq:triangle}, we can set $ \alpha_{p}[i,k] = \alpha_{p}[i,j] + \alpha_{p}[j,k]$ and still ensure that the shortest path connecting any set of states remains the same. We can now define a distance $d_\alpha$ over $\mathcal{G}$, such that for any two elements $u,v\in V_{\mathcal{G}}$
\begin{equation}
\label{eq:dist}
d_{\bm{\alpha}}[u,v] \equiv \sum_{p\in C}^{m} \alpha_{p}[(u)_p,(v)_p]
\end{equation}

Given any subgraph $K=(V_K,E_K)$ of $\mathcal{G}$, we can define the length of $K$ to be the sum of the lengths of all the edges $L(K) \equiv \sum_{(u,v)\in E_K} d_{\bm{\alpha}}[u,v]$.
The maximum parsimony phylogeny problem for $\chi$ is equivalent to constructing the minimum Steiner tree $T_*$ displaying the set of all specified taxa $\chi$, i.e., any tree $T_* (V_* , E_* )$ such that $\chi \subseteq V_* $ and $ L(T_*)$ is minimum. Note that $T_*$ need not be unique.% The problem of finding the minimum Steiner tree in phylogeny is known to be NP-hard~\cite{Foulds}.
\subsection*{Pre-processing}
Before we construct the generalized Buneman graph corresponding to an input, we perform a basic pre-processing of the data. The set of taxa in the input $H$ might not all be distinct over the length of sequence represented in $H$. These correspond to identical rows in $H$ and are eliminated. Similarly, characters that do not mutate for any taxa do not affect the true phylogeny and can be removed.  Furthermore, if two characters are expressed identically in $\chi$ (modulo a relabeling of the states), we will represent them by a single character with each edge length replaced by the sum of the edge lengths of the individual characters. In case there are $n$ such non-distinct characters, one of them is given edge lengths equal to the sum of the corresponding edges in each of the $n$ characters and the rest are discarded. These basic pre-processing steps are often useful in considerably reducing the size of input.

\subsection*{Buneman graph}
The Buneman graph was introduced as a pruning of the complete graph for the special case of binary valued characters. For this special case it is useful to introduce the notion of binary splits $c_p(0)|c_p(1)$ for each character $c_p\in C$, which partition the set of taxa $\chi$ into two sets $c_p(0)$ and $c_p(1)$ corresponding to the value expressed by $c_p$. Each of these sets is called a block of $c_p$. Each vertex of the Buneman graph $\mathcal{B}$ can be represented by an $m$-tuple of blocks $[c_1(i_1),c_2(i_2), \ldots, c_m(i_m)]$, where $i_j = 0 $ or 1, for $j \in\{1, 2, \ldots m\}$. To construct the Buneman graph, a rule is defined for discarding/retaining the subset of vertices contained in each pair of overlapping blocks $[c_p(i_p),c_q(i_q)]$ for each pair of characters $(c_p, c_q)\in C\times C$. All vertices which satisfy $c_p(i_p)\cap c_q(i_q)=\emptyset$ for any pair of characters $(c_p,c_q)$ can be eliminated, while those for which $c_p(i_p)\cap c_q(i_q) \ne \emptyset$ for all $[c_p(i_p), c_q(i_q)]$ are retained.  Buneman previously established for the binary case that the retained vertex set will contain all terminal and Steiner nodes of all distinct minimum length Steiner trees.

We extend this prior result to the weighted multi-state case by presenting an algorithm analogous to the binary case to construct a graph with these properties.

\subsection*{Algorithm for constructing the generalized Buneman graph}\label{BunAlgo}
Briefly, the algorithm looks at the input matrix projected onto each distinct pair of characters $p,q$ and constructs a $n_p\times n_q$ matrix $C(p,q)$, where the $i \times j^{th}$ element $C(p,q)_{ij}$ is 1 only if there is at least one taxon $t$ such that $(t)_p =i$ and $(t)_q=j$ and zero otherwise. The algorithm then implements a rule for each such pair of characters $p,q$ that allows us to enumerate the possible states of those characters in any optimal Steiner tree. For clarity, we will assume that each state for each character is expressed in at least one input taxon, since states that are not present in any taxa cannot be present in a minimum length tree because of the triangle inequality. The rule is defined by a $n_p\times n_q$ matrix $R(p,q)$ determined by the following algorithm :
\begin{enumerate}
\item $R(p,q)_{ij} \leftarrow C(p,q)_{ij}$ for all $i\in\{0,1,\ldots n_p-1\}$ and $j\in\{0,1,\ldots n_q-1\}$.
\item If all non-zero entries in $C(p,q)$ are contained in the set of elements \begin{displaymath}\left( \cup_{k} C(p,q)_{ik} \right)  \bigcup \left( \cup_{k} C(p,q)_{kj} \right)\end{displaymath} for a unique pair $i\in\{0,1,\ldots n_p-1\}$ and $j\in\{0,1,\ldots n_q-1\}$ then $R(p,q)_{xy}\leftarrow1$ for all $x,y$ such that either $x = i$ or $y = j$ (See Fig~\ref{Pruning} where this pair of states are denoted $i_{pq}$ and $i_{qp}$.)
\item If the condition in step 2 is not satisfied then set $R(p,q)_{ij}\leftarrow 1$ for all $i,j$.
\end{enumerate}
\begin{figure}[t!]
\begin{center}
\includegraphics[scale=0.5]{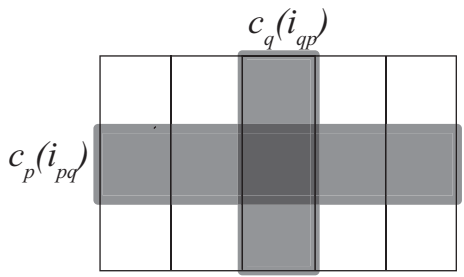}
\caption{An example of the generalized Buneman pruning condition. If all taxa in $\chi$ are present in the shaded region, vertices in all other blocks can be discarded.}
\label{Pruning}
\end{center}
\end{figure}
This set of rules $\{R\}$ then defines a subgraph $B_{pq}\subseteq\mathcal{G}$ for each pair of characters $p,q$, such that any vertex $v \in B_{pq}$ if and only if $R(p,q)_{(v)_p(v)_q} =1$. The intersection of these subgraphs $\mathcal{B}= \cap_{c_p,c_ q\in C}B_{pq}$ then gives the generalized Buneman graph for $\chi$ given any set of distance metrics $\bm{\alpha}=\{\alpha_p| c_p\in C\}$. Note that the Buneman graph of any subset of $\chi$ is a subset of $\mathcal{B}$. It is easily verified that for binary characters, our algorithm yields the standard Buneman graph.

The remainder of this paper will make two contributions.  First, it will show that the generalized Buneman graph $\mathcal{B}$ defined above contains all minimum Steiner trees for the input taxa $\chi$. This will in turn establish that restricting the search space for minimum Steiner trees to $\mathcal{B}$ will not affect the correctness of the search.  The paper will then empirically demonstrate the value of these methods to efficiently finding minimum Steiner trees in practice.

Before we prove that all Steiner minimum trees connecting the taxa are displayed in $\mathcal{B}$, we need to introduce the notion of a \emph{neighborhood decomposition}. Suppose we are given any tree $T(V,E)$ displaying the set of taxa $\chi$. We will contract each degree-two Steiner node (i.e., any node that is not present in $\chi$) and replace its two incident edges by a single weighted edge. Such trees are called {\it X-Trees}~\cite{Semple}. Each X-Tree can be uniquely decomposed into its \emph{phylogenetic X-Tree} components, which are maximal subtrees whose leaves are taxa. Formally, each phylogenetic X-Tree $P(\psi)$ consists of a set of taxa $\psi \subseteq \chi$ and a tree displaying them, such that there is a bijection or labeling $\eta: l_P \rightarrow \psi$ between elements of $\psi$ and the set of leaves $l_P\in P(\psi)$~\cite{Semple} (Fig \ref{SKLTN}) . All vertices in $P(\psi)$ with degree 3 or higher will be called \emph{branch points}. From now on we will assume that given any input tree, such a decomposition has already been performed (Fig \ref{SKLTN}).
Two phylogenetic X-Trees $P(\psi)$ and $P'(\psi)$ are considered \emph{equivalent} if they have identical length and the same tree topology. By identical tree topology, we mean there is a bijection between the edge set of the two trees, such that removing any edge and its image partitions the leaves into identical bi-partitions. We define two trees to be \emph{neighborhood distinct} if after neighborhood decomposition they differ in at least one phylogenetic X-Tree component. We define a labeling of the phylogenetic X-Tree as an injective map $\Gamma: P \rightarrow \mathcal{G}$ between the vertices of $P(\psi)$ and those of the graph $\mathcal{G}$ such that $\Gamma_u$ represents the character string for the image of vertex $u$ in $\mathcal{G}$. Since leaf labels are fixed to be the character strings representing the corresponding taxa,  $\Gamma_t = \eta_t \in \psi$ for any leaf $t \in l_P$. Identical phylogenetic X-Trees can, however, differ in the labels $\Gamma_u$ of internal branch points $u\in P\setminus l_P$.

\begin{figure}[t!]
\begin{center}
\includegraphics[scale=0.5]{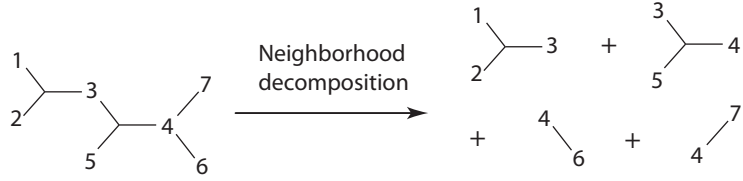}
\caption{An input tree and its phylogenetic X-Tree components,with taxa labelled by integers.}
\label{SKLTN}
\end{center}
\end{figure}
We will use a generalization of the Fitch-Hartigan algorithm to weighted parsimony proposed by Erdos and Szekely~\cite{Erdos,Jiang}. The algorithm uses a similar forward pass/backward pass technique to compute an optimal labeling for any phylogenetic X-Tree $T(\psi)$. Arbitrarily root the tree $T(\psi)$ at some taxon $\zeta$ and starting with the leaves compute the minimum length $minL({\Gamma}_b,T_b)$ of any labeling of the subtree $T_b$ consisting of the vertex $b$ and its descendants, where the root $b$ is labeled ${\Gamma}_b$ as follows.
\begin{enumerate}
\item If $\Gamma_b$ labels a leaf $\eta_b\in \psi$, $minL(\Gamma_b =\eta_b,T_b) =0$ and $\infty$ otherwise.
\item If $b$ has $k$ children $D_b=\{v_1, \ldots v_k\}$, and $ T_v$ is the subtree consisting of $v \in D_b$ and its descendants,
\begin{equation}\label{Score}
minL(\Gamma_b,T_b)= \sum_{v\in D_b}\min_{\Gamma_v} \{ minL(\Gamma_v,T_v) + d_{\bm{\alpha}}[\Gamma_b,\Gamma_v] \}
\end{equation}
where the minimum is to be taken over all possible labels $\Gamma_v$ for each character and for each child $v\in D_b$.
\end{enumerate}
The optimal labeling of $T(\psi)$ is one which minimizes the length at the root: $L(T)=minL(\eta_\zeta,T_\zeta)$.  Labels for each descendant are inferred in a backward pass from the root to the leaves and using equation~\ref{Score}. Note that the minimum length of a tree is just the sum of minimum lengths for each character, i.e., $minL(\Gamma_b,T_b)=\sum_{c_s\in C}minL(\Gamma_b,T_b)^{(s)}$,  where $minL(\Gamma_b,T_b)^{(s)}$ is the minimum cost of tree $T_b$ rooted at $b$ for character $c_s$.

Briefly, our proof is structured as follows: Given any phylogenetic X-Tree $T(\psi)$ labeling (typically denoted $\Gamma$ below), we will show that the generalized Buneman pruning algorithm for each pair of characters $(c_p, c_q)$ defines a subgraph $B_{pq}$ which contains at least one possible labeling of no higher cost (typically denoted $\Phi$ below) for $T(\psi)$.  We will then show that the intersection of these subgraphs $\mathcal{B}= \cap_{p\neq q}B_{pq}$ thus contains an optimal labeling for $T(\psi)$.

If the pruning condition in step 2 of the algorithm that defines the Buneman graph is not implemented for the pair of characters $(c_p,c_q)$, then $B_{pq}=\mathcal{G}$ and all labels are necessarily inside $B_{pq}$. We prove the following lemma for the case when the pruning condition is satisfied, ie., there exist unique states $i_{pq}$ of $c_p$ and $i_{qp}$ of $c_q$, such that each element in the set of leaves $l_T=\{t\in T(\psi) |\eta_t\in \psi\}$ either has $(\eta_t)_p=i_{pq}$ or $(\eta_t)_q=i_{qp}$ or both.  Each time we relabel vertices, we will keep all characters except $c_p$ and $c_q$ fixed. To economize our notation, we will represent the sum of costs in $c_p$ and $c_q$ of the tree $T$ labeled by $\Gamma$, which has some branch point $b$ as the root, simply by writing $L(\Gamma,T) = L(\Gamma,T)^{(p)} + L(\Gamma,T)^{(q)}$. We use the notation $\Gamma_x=[(\Gamma_x)_p,(\Gamma_x)_q]$ to represent the label for a vertex $x$ and suppress the state of all other characters.
\begin{figure}[t!]
\begin{center}
\includegraphics[scale=0.35]{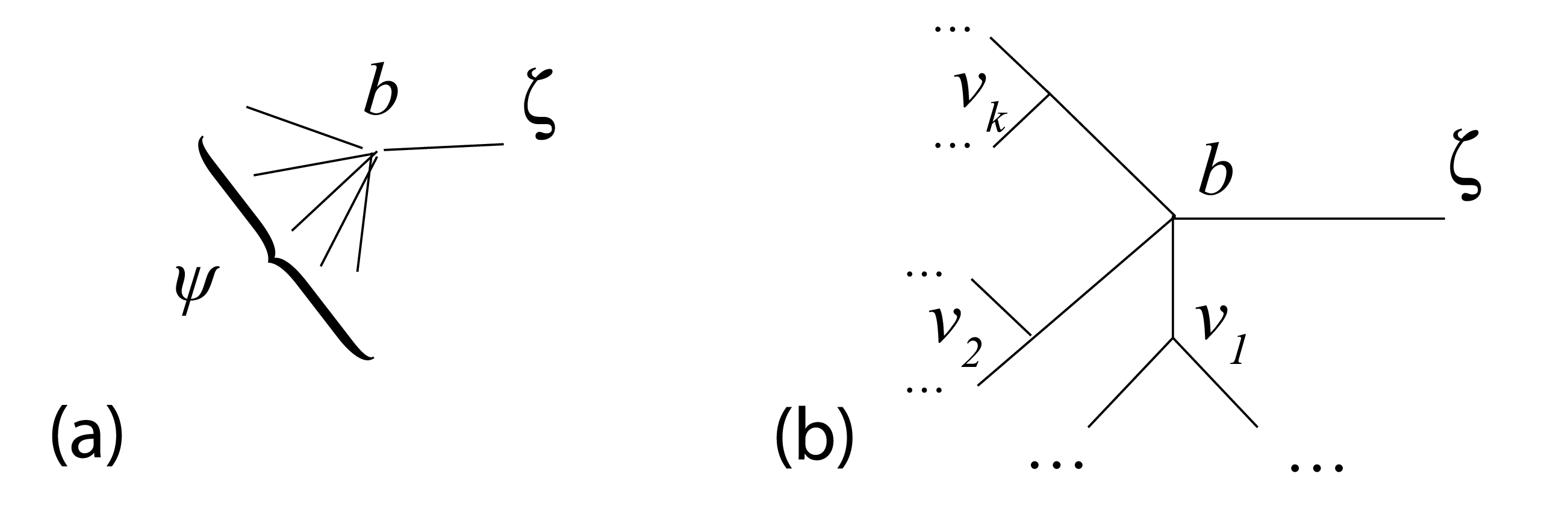}
\caption{(a) The base case of a degree $|\psi|$ star that can be attached to a parent vertex $\zeta$ in the Erdos-Szekely algorithm. (b) $T(\psi)$ for the general case (see Lemma~\ref{BpqLem})}
\label{BpqInduction}
\end{center}
\end{figure}
\begin{lemma}\label{BpqLem} Given any phylogenetic X-Tree $T(\psi)$ with $\psi \subseteq B_{pq}$, and a  labeling $\Gamma$, such that an internal branch point $b \in T\setminus l_T$ is labeled outside $B_{pq}$, i.e., $\Gamma_b\notin B_{pq}$, there exists an alternate labeling $\Phi$ of $T(\psi)$ inside $B_{pq}$ such that
\begin{enumerate}
\item either $L(\Gamma,T)  \geq L(\Phi,T) + d_{\bm{\alpha}}[\Gamma_b,\Phi_b]  $, or ---
\item $L(\Gamma,T) \geq  L(\Phi,T)$  for each of the following choices: $\Phi_b=[i_{pq},i_{qp}]$ or $[i_{pq},(\Gamma_b)_q]$ or $[(\Gamma_b)_p,i_{qp}] $, and $\Phi_v = \Gamma_v$ for all $v \neq b$. We will call a tree that satisfies this second condition a $(c_p,c_q)$-Tree
\end{enumerate}
\end{lemma}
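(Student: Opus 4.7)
The first step is to unpack the hypothesis $\Gamma_b \notin B_{pq}$, which is equivalent to $(\Gamma_b)_p \neq i_{pq}$ and $(\Gamma_b)_q \neq i_{qp}$. Because the cost of a labeling decomposes additively over characters and only the $p$- and $q$-coordinates of some vertices will be perturbed, I can restrict attention to $L(\cdot,T)^{(p)}$ and $L(\cdot,T)^{(q)}$. The natural first attempt is to try the three ``local'' candidate labelings $\Phi^{(1)}, \Phi^{(2)}, \Phi^{(3)}$ that agree with $\Gamma$ everywhere except at $b$, where $\Phi^{(1)}_b = [i_{pq}, i_{qp}]$, $\Phi^{(2)}_b = [i_{pq}, (\Gamma_b)_q]$, and $\Phi^{(3)}_b = [(\Gamma_b)_p, i_{qp}]$. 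Denote by $\Delta_p$ and $\Delta_q$ the cost changes obtained by replacing only the $p$- or only the $q$-coordinate of $\Gamma_b$, respectively; each is a sum over edges incident to $b$ of a per-character difference in $\alpha_p$ or $\alpha_q$. The three candidates incur changes $\Delta_p + \Delta_q$, $\Delta_p$, and $\Delta_q$. If both $\Delta_p \leq 0$ and $\Delta_q \leq 0$, all three candidates are no worse than $\Gamma$, establishing option~2 and placing $T(\psi)$ in the $(c_p,c_q)$-Tree class.

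Otherwise, at least one of $\Delta_p, \Delta_q$ is strictly positive; by symmetry I may assume $\Delta_p > 0$, and in this case option~1 must be produced, which permits labels of internal vertices other than $b$ to change. My plan is to take $\Phi_b = [(\Gamma_b)_p, i_{qp}] \in B_{pq}$ (the corner that avoids the unfavorable $p$-move) and re-optimize the $q$-labels of the internal vertices of $T$ via the Erdos-Szekely backward pass of equation~\ref{Score} applied to character $c_q$, with $T$ rooted at $b$ and $b$ constrained to $q$-value $i_{qp}$; leaves remain in $\psi \subseteq B_{pq}$, and all other character coordinates are untouched. Since the backward pass minimizes over labelings, the $q$-cost of $T$ under $\Phi$ equals $minL(\Phi_b, T_b)^{(q)}$, which bounds from below the $q$-cost of any labeling of $T$ in which $b$ takes $q$-value $i_{qp}$. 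Option~1 then reduces to the single inequality $minL(\Phi_b, T_b)^{(q)} \leq L(\Gamma, T)^{(q)} - \alpha_q[(\Gamma_b)_q, i_{qp}]$, since $d_{\bm{\alpha}}[\Gamma_b, \Phi_b] = \alpha_q[(\Gamma_b)_q, i_{qp}]$.

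The crux of the argument, and the main obstacle, is establishing this last inequality, which is where the structural hypotheses on $B_{pq}$ truly bite. The assumption $\Delta_p > 0$ forces enough neighbors of $b$ to have $p$-coordinate far from $i_{pq}$ to overcome the pull of the leaves in the block $\{t \in l_T : (\eta_t)_p = i_{pq}\}$; by the defining property of $B_{pq}$, any such neighbor that is itself a leaf must then have $q$-coordinate equal to $i_{qp}$, thereby supplying the $q$-savings needed when $b$ is moved to $q$-value $i_{qp}$. I would proceed by induction on the number of internal branch points of $T$, with the star of Fig.~\ref{BpqInduction}(a) as the base case, combining triangle-inequality bounds along each root-to-leaf $q$-path with the Erdos-Szekely recursion of equation~\ref{Score} to accumulate the required slack. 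The delicate sub-case arises when an internal neighbour $v$ of $b$ has its $\Gamma$-label outside $B_{pq}$: there one must either invoke the lemma recursively on the subtree below $v$ before carrying the estimate across $b$, or argue that the Erdos-Szekely minimum already dominates any such suboptimal interior label, so that the internal pullback contributes no net loss.
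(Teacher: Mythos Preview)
Your approach diverges from the paper's, and the place you flag as ``delicate'' is exactly where the argument is incomplete.

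The paper does \emph{not} case-split on the signs of $\Delta_p,\Delta_q$ computed at a fixed vertex $b$ against its current neighbours. Instead it first reduces, without loss of generality, to the situation where \emph{every} internal branch point lies outside $B_{pq}$, and then inducts on the number of such branch points. In the inductive step it picks an internal edge $(b,v_1)$ between two branch points and splits $T$ into subtrees $T_b$ and $T_1$; each has strictly fewer branch points, so the inductive hypothesis applies to both. This yields, for each subtree, either option~1 (a labeling with the extra $d_{\bm\alpha}$-slack at its root) or option~2 (the $(c_p,c_q)$-Tree condition, meaning \emph{all three} corner relabelings at its root are admissible). A four-way case analysis on the combination of outcomes then reconstitutes a labeling of $T$; the freedom in option~2 is what allows the relabeling at $v_1$ to be aligned with whichever coordinate ($p$ or $q$) the relabeling of $T_b$ has committed to, so that the cost of the edge $(b,v_1)$ cooperates with the triangle inequality.

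There are two concrete gaps in your plan. First, re-optimizing only the $q$-coordinates of internal vertices via an unconstrained Erd\H{o}s--Sz\'ekely pass does not, in general, produce a labeling inside $B_{pq}$: an internal vertex $v$ with $(\Gamma_v)_p\neq i_{pq}$ lands in $B_{pq}$ only if its new $q$-value is $i_{qp}$, and the DP has no reason to choose that. So even if your crux inequality held for the unconstrained minimum, it would not deliver the required $\Phi$. Second, the crux inequality itself rests on the implication ``$(\cdot)_p\neq i_{pq}\Rightarrow(\cdot)_q=i_{qp}$'', which is valid only for leaves; for an internal neighbour outside $B_{pq}$ there is no such constraint, and your two suggested workarounds (recurse on the subtree first, or argue by DP domination) both collapse into precisely the kind of two-sided bookkeeping the paper formalizes via the $(c_p,c_q)$-Tree dichotomy. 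In particular, ``recurse on the subtree'' gives you a relabeling at $v_1$ that has committed to one of $[i_{pq},y_1]$ or $[x_1,i_{qp}]$, and you then have no mechanism to ensure it matches the $q$-direction move you chose at $b$ --- which is exactly why the paper needs the full three-option flexibility of condition~2 in the inductive hypothesis, not merely ``some relabeling exists''.
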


\begin{proof}
We will use induction on the number of internal branch points outside $B_{pq}$ to prove the claim. Without loss of generality we can consider all branch points of $T(\psi)$ to be labeled outside $B_{pq}$.  If some branch points are labeled inside $B_{pq}$ then they can be treated as leaves of smaller X-Tree(s) that have all branch points outside $B_{pq}$. This is similar to the neighborhood decomposition we performed earlier for those branch points that were present in the set of input taxa. The set of branch points is then the set $T\setminus l_T=\{u\in T|\Gamma_u\notin B_{pq}\}$.

For the base case assume all the leaves are joined at a single branch point $b$ to form a star of degree $|\psi|$ (see Fig.~\ref{BpqInduction}(a) without the root $\zeta$). We can group the leaves into three sets:
 \begin{enumerate}
 \item $I=\{ \eta_u=[i_{pq},y_u]|y_u\neq i_{qp}, \eta_u\in\psi\}$
 \item $II=\{\eta_v=[x_v,i_{qp}]|x_v\neq i_{pq}, \eta_v\in\psi\}$
 \item $III=\{\eta_w=[i_{pq},i_{qp}]| \eta_w\in\psi\}$
 \end{enumerate}
 The cost of the tree for $c_p$ and $c_q$, with branch point $\Gamma_b=[x,y]$, is
 \begin{eqnarray}
L(\Gamma,T)^{(p)}+ L(\Gamma,T)^{(q)}&=&\sum_{u\in I} (\alpha_{p}[x,i_{pq}] +\alpha_{q}[y,y_u] ) +\sum_{v\in II} (\alpha_{p}[x,x_v] \nonumber \\
 &+& \alpha_{q}[y,i_{qp}] ) + \sum_{w\in III} (\alpha_{p}[x,i_{pq}] +\alpha_{q}[y,i_{qp}] )
 \end{eqnarray}
 The only way for $L(\Gamma,T)^{(p)}+ L(\Gamma_b,T)^{(q)}$ to be minimum with $x\neq i_{pq}$ and $y\neq i_{qp}$, is if $III=\emptyset$ and $|I| =|II|$. For contradiction, suppose $|I|+|III|>|II|$.  We could then define a labeling $\Phi$ identical to $\Gamma$ over all characters, except $\Phi_b=[i_{pq},y]$, such that $d_{\bm{\alpha}}[\Gamma_b,\Phi_b]=\alpha_p[\Gamma_b,\Phi_b]$.  We could then reduce the length, since
 \begin{eqnarray}  L(\Gamma,T)^{(p)} &=& \sum_{u\in I} \alpha_{p}[x,i_{pq}]  + \sum_{v\in II} \alpha_{p}[x,x_v]  +\sum_{w\in III} \alpha_{p}[x,i_{pq}] \nonumber \\
 &\geq&  \alpha_{p}[x,i_{pq}]+\sum_{v\in II} (\alpha_{p}[x,x_v] +  \alpha_{p}[x,i_{pq}])  \nonumber \\
 &\geq& \alpha_{p}[x,i_{pq}] + \sum_{v\in II} \alpha_{p}[i_{pq},x_v] = L(\Phi,T)^{(p)} +d_{\bm{\alpha}}[\Gamma_b,\Phi_b]\end{eqnarray}
where the last inequality follows from the triangle inequality. Similarly, if $|II|+|III|>|I|$, we could define $\Phi_b =[x,i_{qp}]$ and arrive at $  L(\Gamma,T)^{(q)}\geq  L(\Phi,T)^{(q)} + d_{\bm{\alpha}}[\Gamma_b,\Phi_b] $.

On the other hand if $|I|=|II|$ and $III=\emptyset$ setting $\Phi_b=[i_{pq},y]$ or $\Phi_b=[x,i_{qp}]$ or $\Phi_b=[i_{pq},i_{qp}]$ all achieve a length no more than $L(\Gamma,T)^{(p)}+ L(\Gamma,T)^{(q)}$. Therefore, this is a $(c_p,c_q)$-Tree. This proves the base case for our proposition.

We will now assume that the claim is true for all trees with $n$ branch points or less. Suppose we have a labeled tree $T(\psi)$ with $n+1$ branch points which are all outside $B_{pq}$. Let $D_b=\{v_1, \dots v_k\}$ be the children of a branch point $b$ in $T(\psi)$ and $\{T_{1},\ldots T_{k}\}$ be the subtrees of each $v\in D_b$ and their descendants. Note that some of these descendants may be leaves. Since $T(\psi)$ has at least two branch points, one of its descendants (say $v_1$) must be a branch point (Fig~\ref{BpqInduction}(b)). Let $T_b =T\setminus T_{1}$ be the subtree consisting of $b$ and all its other descendants.  For clarity we will use the notation $\Gamma_b=[x_b,y_b]$ and $\Gamma_{v_1}=[x_1,y_1]$. This implies,
\begin{eqnarray}\label{eq:gamma}
L(\Gamma,T) &=&L(\Gamma,T_b) + L(\Gamma,T_1) + d_{\bm{\alpha}}[\Gamma_{b},\Gamma_{v_1}]\nonumber \\
&=&L(\Gamma,T_b) + L(\Gamma,T_1) + \alpha_p[x_b,x_1] + \alpha_q[y_b,y_1]
\end{eqnarray}
There are four possibilities.
\begin{enumerate}
\item Both $T_b$ and $T_{1}$ are $(c_p,c_q)$-Trees with $n$ or less branch points - In this case, by induction, both $T_b$ and $T_1$ can be relabeled with $\Phi_b$ and $\Phi_{v_1}$ of the form  $[i_{pq},i_{qp}]$. Since the cost in $c_p$ and $c_q$ of the edge $(b,v_1)$ is now zero, we have an optimal labeling of $T(\psi)$ within $B_{pq}$ and $L(\Gamma,T) \ge L(\Phi,T)$ . Note that each of the choices of the form $[i_{pq},y_1]$ or $[x_1,i_{pq}]$ for relabeling of $b$ also satisfy property 2 of the claim. Therefore, this is a $(c_p,c_q)$-Tree.
\item $T_b$ is a $(c_p,c_q)$-Tree, but $T_1$ is not. Therefore, there is a labeling $\Phi$ of $T_1$ with either $\Phi_{v_1}=[i_{pq},y_1]$ and/or  $\Phi_{v_1}=[x_1,i_{pq}]$ such that
\begin{equation}
L(\Gamma,T_1) \geq L(\Phi,T_1) + d_{\bm{\alpha}}[\Gamma_{v_1},\Phi_{v_1}]
\end{equation}
 Let us assume for concreteness that  $\Phi_{v_1}=[i_{pq},y_1]$. It will become clear that the argument works for the other possible choices. Since, $T_b$ is a $(c_p,c_q)$-Tree, by induction, we can choose a labeling of $T_b$ with $\Phi_b=[i_{pq},y_b]$, such that $L(\Gamma,T_b)\geq L(\Phi,T_b)$. This gives
\begin{eqnarray}
 L(\Phi,T)&=&L(\Phi,T_b) +L(\Phi,T_1) + d_{\bm{\alpha}}[\Phi_{b},\Phi_{v_1}] \nonumber\\
 &=&L(\Phi,T_b) +L(\Phi,T_1) +\alpha_q[y_b,y_1]
 \end{eqnarray}
 Comparing the previous two equations with equation~\ref{eq:gamma}, we get,
 \begin{eqnarray}
L(\Gamma,T) &=&L(\Gamma,T_b) + L(\Gamma,T_1) + \alpha_p[x_b,x_1] + \alpha_q[y_b,y_1] \nonumber\\
&\geq& L(\Phi,T_b) +L(\Phi,T_1) + d_{\bm{\alpha}}[\Gamma_{v_1},\Phi_{v_1}] + \alpha_p[x_b,x_1] + \alpha_q[y_b,y_1]\nonumber\\
&=& L(\Phi,T_b) +L(\Phi,T_1) +\alpha_p[x_1,i_{pq}] + \alpha_p[x_b,x_1] + \alpha_q[y_b,y_1]\nonumber\\
&\geq&  L(\Phi,T_b) +L(\Phi,T_1) + \alpha_p[x_b,i_{pq}] + \alpha_q[y_b,y_1]\nonumber\\
&=&  L(\Phi,T_b) +L(\Phi,T_1) +  d_{\bm{\alpha}}[\Gamma_{b},\Phi_{b}] + d_{\bm{\alpha}}[\Phi_{b},\Phi_{v_1}]\nonumber\\
&=& L(\Phi,T) + d_{\bm{\alpha}}[\Gamma_{b},\Phi_{b}]
\end{eqnarray}
which satisfies the first possibility of the claim. It should be clear that if $\Phi_{v_1}=[x_1,i_{qp}]$ then the choice $\Phi_b=[x_b, i_{qp}]$  would give an identical bound.
\item $T_1$ is  a $(c_p,c_q)$-Tree, but $T_b$ is not. This case is similar to the previous one. Since $T_b$ has less than $n$ branch points, which are all outside $B_{pq}$, and it is not a $(c_p,c_q)$-Tree, we have from induction a labeling $\Phi$ of $T_b$ with either $\Phi_{b}=[i_{pq},y_b]$ and/or  $\Phi_{b}=[x_b,i_{pq}]$ such that
\begin{equation}
L(\Gamma,T_b)\ge L(\Phi,T_b) + d_{\bm{\alpha}}[\Gamma_b,\Phi_b]
\end{equation}
As before, let us assume $\Phi_{b}=[i_{pq},y_b]$ for concreteness. Since $T_1$ is a $(c_p,c_q)$-Tree, we can choose a labeling with $\Phi_{v_1}=[i_{pq},y_1]$ such that $L(\Gamma,T_1) \geq L(\Phi,T_1) $. This gives,
\begin{eqnarray}
 L(\Phi,T)&=&L(\Phi,T_b) +L(\Phi,T_1) + d_{\bm{\alpha}}[\Phi_{b},\Phi_{v_1}] \nonumber\\
 &=&L(\Phi,T_b) +L(\Phi,T_1) +\alpha_q[y_b,y_1]
 \end{eqnarray}
 Comparing the previous two equations with equation~\ref{eq:gamma}, we get,
\begin{eqnarray}
L(\Gamma,T) &=&L(\Gamma,T_b) + L(\Gamma,T_1) + \alpha_p[x_b,x_1] + \alpha_q[y_b,y_1] \nonumber\\
&\geq& L(\Phi,T_b) +L(\Phi,T_1) + d_{\bm{\alpha}}[\Gamma_{b},\Phi_{b}] + \alpha_p[x_b,x_1] + \alpha_q[y_b,y_1]\nonumber\\
&\geq& L(\Phi,T_b) +L(\Phi,T_1) + d_{\bm{\alpha}}[\Gamma_{b},\Phi_{b}]  +  d_{\bm{\alpha}}[\Phi_{b},\Phi_{v_1}] \nonumber\\
&=& L(\Phi,T) + d_{\bm{\alpha}}[\Gamma_{b},\Phi_{b}]
\end{eqnarray}
An identical argument carries through if $\Phi_b=[x_b,i_{qp}]$.
\item Neither $T_1$ or $T_b$ are $(c_p,c_q)$-Trees. It follows from induction that there is a labeling $\Phi$ such that $L(\Gamma,T_b)\ge L(\Phi,T_b) + d_{\bm{\alpha}}[\Gamma_b,\Phi_b]  $ and  $L(\Gamma,T_1) \geq L(\Phi,T_1) + d_{\bm{\alpha}}[\Gamma_{v_1},\Phi_{v_1}]  $. There are two possibilities in this case.
\begin{enumerate}
\item $(\Phi_b=[i_{pq},y_b]$ and $\Phi_{v_1}=[i_{pq},y_1])$ or $(\Phi_b=[x_b,i_{qp}]$ and $\Phi_{v_1}=[x_1,i_{qp}])$. As before, we will prove the claim for the former possibility while the later case can be proved by an identical argument.
    \begin{eqnarray}
 L(\Phi,T)&=&L(\Phi,T_b) +L(\Phi,T_1) + d_{\bm{\alpha}}[\Phi_{b},\Phi_{v_1}] \nonumber\\
 &=&L(\Phi,T_b) +L(\Phi,T_1) +\alpha_q[y_b,y_1]
\end{eqnarray}
\begin{eqnarray}
L(\Gamma,T) &=&L(\Gamma,T_b) + L(\Gamma,T_1) + \alpha_p[x_b,x_1] + \alpha_q[y_b,y_1] \nonumber\\
&\geq& L(\Phi,T_b) +L(\Phi,T_1) + d_{\bm{\alpha}}[\Gamma_{b},\Phi_{b}] + d_{\bm{\alpha}}[\Gamma_{v_1},\Phi_{v_1}] \nonumber\\
&+& \alpha_p[x_b,x_1] + \alpha_q[y_b,y_1]\nonumber\\
&\geq& L(\Phi,T_b) +L(\Phi,T_1) + d_{\bm{\alpha}}[\Gamma_{b},\Phi_{b}]  +  \alpha_q[y_b,y_1] \nonumber\\
&=& L(\Phi,T_b) +L(\Phi,T_1) + d_{\bm{\alpha}}[\Gamma_{b},\Phi_{b}]  +  d_{\bm{\alpha}}[\Phi_{b},\Phi_{v_1}] \nonumber\\
&=& L(\Phi,T) + d_{\bm{\alpha}}[\Gamma_{b},\Phi_{b}]
\end{eqnarray}
 This also satisfies the claim. The proof for $\Phi_b=[x_b,i_{qp}]$ and $\Phi_{v_1}=[x_1,i_{qp}]$ is identical.
 \item  $(\Phi_b=[i_{pq},y_b]$ and $\Phi_{v_1}=[x_1,i_{qp}])$ or $(\Phi_b=[x_b,i_{qp}]$ and $\Phi_{v_1}=[i_{pq},y_1])$. As before, we show the calculation for the former possibility. In this case 
 \begin{eqnarray}
 L(\Phi,T)&=&L(\Phi,T_b) +L(\Phi,T_1) + d_{\bm{\alpha}}[\Phi_{b},\Phi_{v_1}] \nonumber\\
 &=&L(\Phi,T_b) +L(\Phi,T_1) +\alpha_p[x_b,i_{pq}]+\alpha_q[i_{qp},y_1]
\end{eqnarray}
Combining this with equation~\ref{eq:gamma} we get,
 \begin{eqnarray}
L(\Gamma,T) &=&L(\Gamma,T_b) + L(\Gamma,T_1) + \alpha_p[x_b,x_1] + \alpha_q[y_b,y_1] \nonumber\\
&\geq& L(\Phi,T_b) +L(\Phi,T_1) + d_{\bm{\alpha}}[\Gamma_{b},\Phi_{b}] + d_{\bm{\alpha}}[\Gamma_{v_1},\Phi_{v_1}] \nonumber\\
&+& \alpha_p[x_b,x_1] + \alpha_q[y_b,y_1]\nonumber\\
&=& L(\Phi,T_b) +L(\Phi,T_1) + \alpha_p[x_b,i_{pq}] + \alpha_q[i_{qp},y_1] \nonumber\\
&+& \alpha_p[x_b,x_1] + \alpha_q[y_b,y_1]\nonumber\\
&\geq&L(\Phi,T_b) +L(\Phi,T_1) + \alpha_p[x_b,i_{pq}] + \alpha_q[i_{qp},y_1] \nonumber\\
&=&L(\Phi,T_b) +L(\Phi,T_1) + d_{\bm{\alpha}}[\Phi_{b},\Phi_{v_1}] =  L(\Phi_b,T)
 \end{eqnarray}
 But if we now relabel $b$ and $v_1$ with ${\tilde\Phi_{v_1}}= [i_{pq},i_{qp}]$ and  ${\tilde\Phi_{b}}= [i_{pq},i_{qp}]$ while ${\tilde\Phi_{v}}=\Phi_{v}$ for all other $v$, we get $L(\Phi,T_1) + \alpha_q[y_1,i_{qp}]\geq L({\tilde{\Phi}}_{v_1},T_1) $ and $L(\Phi,T_b) + \alpha_p[x_b,i_{pq}]\geq L({\tilde{\Phi}},T_b) $. This immediately gives,
 \begin{eqnarray}
 L({\tilde{\Phi}},T) &=& L({\tilde{\Phi}},T_b) + L({\tilde{\Phi}},T_1) + d_{\bm{\alpha}}[{\tilde{\Phi}}_{b}, {\tilde{\Phi}}_{v_1}] \nonumber\\
 &\geq&L(\Phi,T) \geq L(\Gamma,T)
 \end{eqnarray}
 Identical arguments work for the choices ${\tilde\Phi_{v_1}}= [x_1,i_{qp}]$ and ${\tilde\Phi_{b}}= [x_b,i_{qp}]$.
\end{enumerate}
\end{enumerate}
  This proves that if either of the two possibilities claimed are always true for an X-Tree with $n$ branch points or less then they are also true for a tree with $n+1$ branch points. The proof for arbitrary $n$ follows from induction.
\qed
\end{proof}

\begin{corollary} Given a minimum length phylogenetic X-Tree $T(\psi)$ there is an optimal labeling for each branch point within $\mathcal{B}$.
\end{corollary}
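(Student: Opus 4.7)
The plan is to derive the corollary by iterative application of Lemma~\ref{BpqLem}. I would start with any optimal labeling $\Gamma^*$ of $T(\psi)$ and, for each pair of characters $(c_p, c_q)$ for which the pruning rule of step~2 is triggered, invoke the lemma to replace the current labeling by a labeling inside $B_{pq}$ of cost at most $L(\Gamma^*)$. Optimality of $\Gamma^*$ forces equality of the costs, so every intermediate labeling remains optimal, and after processing every triggered pair the resulting labeling should lie in $\mathcal{B} = \bigcap_{p \neq q} B_{pq}$. For pairs of characters for which step~2 does not fire, $B_{pq} = \mathcal{G}$ and the constraint is vacuous, so these can be ignored.

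The key subtlety is that applying the lemma for $(c_p, c_q)$ modifies only the $p$- and $q$-coordinates of the branch points, so iteration does not interfere across pairs with disjoint character sets; but when two processed pairs share a character, a later application may disturb an earlier $B_{p'q'}$ membership by, for example, resetting a $p$-coordinate previously placed at $i_{pq'}$ to the value $i_{pq}$. I would address this by exploiting the freedom afforded by condition~2 of Lemma~\ref{BpqLem}, which permits the new label at a branch point $b$ to be any of $[i_{pq},i_{qp}]$, $[i_{pq},(\Gamma_b)_q]$, or $[(\Gamma_b)_p,i_{qp}]$. Because every taxon in $\psi$ simultaneously satisfies the pruning constraints of all triggered pairs, at least one of these three local choices at each $b$ is consistent with the $B_{p'q'}$ memberships already secured, and similarly the stronger reduction of condition~1 can be chained through neighboring branch points because its $d_{\bm{\alpha}}[\Gamma_b,\Phi_b]$ slack absorbs the extra transitions introduced along the edge propagating outward.

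The hardest step will be turning this compatibility argument into a rigorous termination guarantee. My plan is to introduce a lexicographic potential $\Psi(\Gamma) = \bigl(L(\Gamma),\ |\{(b,(p,q)) : b \notin B_{pq}\}|\bigr)$, minimized over optimal labelings, and to show that whenever the second coordinate is positive some choice of pair $(c_p, c_q)$ and relabeling prescribed by the lemma strictly decreases $\Psi$. The first coordinate never increases (by the lemma), so a strict decrease must come in the second coordinate: the $B_{pq}$ violations at every branch point are removed, and the case analysis above guarantees that the new violations introduced in other $B_{p'q'}$'s are strictly fewer. Since $\Psi$ is bounded below, the process terminates with $\Psi = (L(\Gamma^*),0)$, producing an optimal labeling whose branch points all lie in $\mathcal{B}$, which is precisely the content of the corollary.
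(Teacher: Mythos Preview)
Your iterative strategy differs substantially from the paper's argument, and the termination step you yourself flag as hardest is where a genuine gap remains. The paper does not iterate over pairs at all. Since $T(\psi)$ has minimum length, alternative~1 of Lemma~\ref{BpqLem} is impossible (it would force $L(\Phi,T)<L(\Gamma,T)$), so alternative~2 holds for every triggered pair. The paper then collects, for each branch point $b$ and each triggered pair $(c_p,c_q)$, the block union
\[
\Lambda(\Gamma_b,p,q)=[c_p(i_{pq})c_q(i_{qp})]\cup[c_p(i_{pq})c_q((\Gamma_b)_q)]\cup[c_p((\Gamma_b)_p)c_q(i_{qp})],
\]
forms the single intersection $S_b=\bigcap_{B_{pq}\neq\mathcal{G}}\Lambda(\Gamma_b,p,q)$, and argues directly that $S_b$ is a non-empty subset of $\mathcal{B}$ containing an optimal label for $b$; the remaining branch points are then relabeled by a backward pass of the Erdos--Szekely algorithm rooted at $b$. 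This one-shot intersection argument replaces your entire iteration.

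The unproven step in your plan is the assertion that ``the new violations introduced in other $B_{p'q'}$'s are strictly fewer.'' Nothing you wrote establishes this. Concretely, suppose the pair $(p,r)$ has already been processed and $\Gamma_b\in B_{pr}$ currently holds because $(\Gamma_b)_p=i_{pr}$. Applying condition~2 of the lemma for $(p,q)$ and choosing $(\Phi_b)_p=i_{pq}$ with $i_{pq}\neq i_{pr}$ can destroy the $B_{pr}$ membership if $(\Gamma_b)_r\neq i_{rp}$ and no taxon witnesses the pair $(i_{pq},(\Gamma_b)_r)$. You claim the three-way freedom in condition~2 lets you dodge every such conflict, but showing that at least one of the three choices is simultaneously compatible with \emph{all} previously secured $B_{p'q'}$ memberships at $b$ is precisely the statement that something like $S_b$ is non-empty---which is the content the paper supplies directly rather than deriving from an iteration. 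Without that lemma your potential $\Psi$ need not strictly decrease, and the process can cycle. The missing idea is to replace the pair-by-pair sequence with a single simultaneous compatibility argument at each branch point.
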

\begin{proof}
 Lemma~\ref{BpqLem} establishes that for any minimum Steiner tree labeled by $\Gamma$ and any branch point $b\in T$ such that $\Gamma_b\notin B_{pq}$, an alternative optimal labeling $\Phi$ exists such that  $\Phi_b$ is inside the union of blocks
\begin{displaymath}\Lambda(\Gamma_b,p,q)\equiv [c_p(i_{pq})c_q(i_{qp})]\cup [c_p(i_{pq})c_q((\Gamma_{b})_q)]\cup [c_p((\Gamma_b)_p)c_q(i_{qp})] \end{displaymath} If we root the tree at $b$, the new optimal labeling for all its descendants is inferred in a backward pass of the Erdos-Szekely algorithm. This ensures that each branch point in a minimum length phylogenetic X-Tree is labeled inside $B_{pq}$.
Let $S_b =\cap_{B_{pq}\neq \mathcal{G}}\Lambda(\Gamma_b,p,q) \subseteq \mathcal{B}$,  where the intersection is taken over all pair of characters for which the pruning condition is satisfied. It follows from Lemma~\ref{BpqLem} that $S_b$ also contains an alternate optimal labeling of $T(\psi)$. Note that $S_b$ is a non-empty subset of $\mathcal{B}$. This must be true because given a character pair $c_p,c_q$, each union of blocks contains at least one taxon and so the rule matrix $R(p,q)$ that defines the Buneman graph must have ones for each of these blocks. Therefore each element in  $S_b$ represents a distinct vertex of the Buneman graph.
\qed
\end{proof}
As argued before, any minimum Steiner tree can be decomposed uniquely into phylogenetic X-Tree components and the previous corollary ensures that each phylogenetic X-Tree can be labeled optimally inside the generalized Buneman graph. It follows that all distinct minimum Steiner trees are contained inside the generalized Buneman graph.

\subsection*{Integer Linear Program (ILP) Construction}
We briefly summarize the ILP flow construction used to find the optimal phylogeny.  We convert the generalized Buneman graph into a directed graph by replacing an edge between vertices $u$ and $v$ with two directed edges $(u,v), (v,u)$ each with weight $w_{uv} $ as determined by the distance metric. Each directed edge has a corresponding binary variable $s_{u,v}$ in our ILP. We arbitrarily choose one of the taxa as the root $r$, which acts as a source for the flow model.  The remaining taxa $T\equiv \chi - \{r\}$ correspond to sinks.  Next, we set up real-valued flow variables $f_{u,v}^{t}$, representing the flow along the edge $(u,v)$ that is intended for terminal $t$. The root $r$ outputs $|T|$ units of flow, one for each terminal.  The Steiner tree is the minimum-cost tree satisfying the flow constraints. This ILP was described in \cite{Sri}, and we refer the reader to that paper for further details. The ILP for this construction of the Steiner tree problem is the following:

\begin{eqnarray}
\textrm{Minimize} \sum_{(u,v)\in\mathcal{B}}  w_{uv}s_{u,v} \nonumber \\
\textrm{~subject to}~\sum_v  (f_{u,v}^{t} - f_{v,u}^{t})=0&& ~\forall u\in \mathcal{B}\setminus\{t,r\},~\forall t\in T\nonumber \\
\sum_v (f_{r,v}^{t} - f_{v,r}^{t} )= 1&&~\forall t\in T\nonumber \\
0 \leq f_{u,v}^{t} \leq s_{u,v}&&~\forall (u,v)\in\mathcal{B},~\forall t\in T\nonumber \\
s_{u,v} \in \{0,1\}&&~\forall (u,v)\in\mathcal{B}
\end{eqnarray}

\section*{Results}

\begin{table}[t!]
\caption{Pruning and run time results for the data sets reported.}
\begin{center}
\resizebox{12.2cm}{!}{
\begin{tabular}{|c|c|c|c|cc|cc|cc|}

\hline
Data & Input & Complete & $|\mathcal{B}|$ & \multicolumn{2}{c|}{ILP} & \multicolumn{2}{c|}{{ \tt pars}} &\multicolumn{2}{c|}{{ \tt PAUP*}} \\
& (raw) & graph & & length & time  & length & time &length & time \\ \hline
{\em O. rufipogon} DNA & $41 \times 1043$ & $2^{18}*3^{2}$ & 58 & 57 & 0.29s & 57 & 2.57s& 57 & 2.09s \\ \hline
Human mt-DNA & $80\times 245$ & $2^{28}$ & 64 & 44 & 0.48s & 45 & 0.56s & 44 & 5.69s \\ \hline
HIV-1 RT protein& $50\times176$ & $2^{16}*3*4^{2}$ & 297 & 40 & 127.5s & 42 &0.30s& 40 &3.84s  \\ \hline \hline
mt3000 & $500 \times 3000$ & $2^{99}*3^{2}$ &322& 177& 40s& 178 & 2m37s & 177&5h23m \\ \hline
mt5000a & $500 \times 5000$ & $2^{167}*3^{2}$ &1180& 298 & 5h10m& 298 & 35m49s & 298 &3h52m \\ \hline
mt5000b & $500\times 5000$ & $2^{229}*3^{3}$ & 360 & 312& 3m41s & 312 & 57m6s & 312 & 2h40m \\ \hline
mt10000& $500\times10000$ & $2^{357}*3^{5}$ &6006& N. A. &N. A. & 637 & 1h34m& 637 &1h39m \\ \hline
\end{tabular}
}
\label{DataTable1}
\end{center}
\end{table}
We implemented our generalized Buneman pruning and the ILP in C++. The ILP was solved using the Concert callable library of CPLEX 10.0. We compared the performance of our method with two popular heuristic methods for maximum parsimony phylogeny inference --- {\tt pars}, which is part of the freely-available PHYLIP package~\cite{Phylip}, and {\tt PAUP*}~\cite{PAUP}, the leading commercial phylogenetics package. We attempted to use PHYLIP's exact branch-and-bound method {\tt DNA penny} for nucleotide sequences, but discontinued the tests when it failed to solve any of the data sets in under 24 hours.  In each case, {\tt pars} and {\tt PAUP*} were run with default parameters.  We first report results from three moderate-sized data sets selected to provide varying degrees of difficulty:  a set of 1,043 sites from a set of 41 sequences of {\em O.~rufipogon} (red rice)~\cite{rice},  245 positions from a set of 80 human mt-DNA sequences reported by~\cite{Ychr}, and 176 positions from 50 HIV-1 reverse transcriptase amino acid sequences. The HIV sequences were retrieved by NCBI BLASTP~\cite{Blast} searching for the top 50 best aligned taxa for the query sequence GI 19571541 and default parameters.  We then added additional tests on larger data sets all derived from human mitochondrial DNA. The mtDNA data was retrieved from NCBI BLASTN, searching for the 500 best aligned taxa for the query sequence GI 61287976 and default parameters. The complete set of 16,546 characters (after removing indels) was then broken in four windows of varying sizes and characteristics: the first 3,000 characters (mt3000), the first 5,000 characters (mt5000a), the next 5,000 characters (mt5000b), and the first 10,000 characters (mt10000).
Table~\ref{DataTable1} summarizes the results.

For the set of 41 sequences of lhs-1 gene  from {\em O.~rufipogon} (red rice)~\cite{rice}, our method pruned the full graph of  $2^{18}*3^{2}$ nodes (after screening out redundant characters) to 58.  Fig \ref{Phylogenies}(a) shows the resulting phylogeny.  Both {\tt PAUP*} and {\tt pars} yielded an optimal tree although more slowly than the ILP (2.09 seconds and 2.57 seconds respectively, as opposed to  0.29 seconds).

For the 245-base human mt-DNA sequences, the generalized Buneman pruning was again highly efficient, reducing the state set from $2^{28}$ after removing redundant sequences to 64.  Fig \ref{Phylogenies}(b) shows the phylogeny returned.  While {\tt PAUP*} was able to find the optimal phylogeny (although it was again slower at 5.69 seconds versus 0.48 seconds), {\tt pars} yielded a slightly sub-optimal phylogeny (length 45 instead of 44) in a comparable run time (0.56 seconds).

For HIV-1 sequences, our method pruned the full graph of $2^{16}*3*4^{2}$ possible nodes to a generalized Buneman graph of 297 nodes, allowing solution of the ILP in about two minutes. Fig \ref{Phylogenies}(c) shows an optimal phylogeny for the data.  {\tt PAUP*} was again able to find the optimal phylogeny and in this case was faster than the ILP (3.84 seconds as opposed to 127.5 seconds). {\tt pars} required a shorter run time of 0.30 seconds, but yielded a sub-optimal tree of length of 42, as opposed to the true minimum of 40.

\begin{figure}[t!]
\begin{center}
\resizebox{12.2cm}{!}{
\includegraphics[scale=0.5]{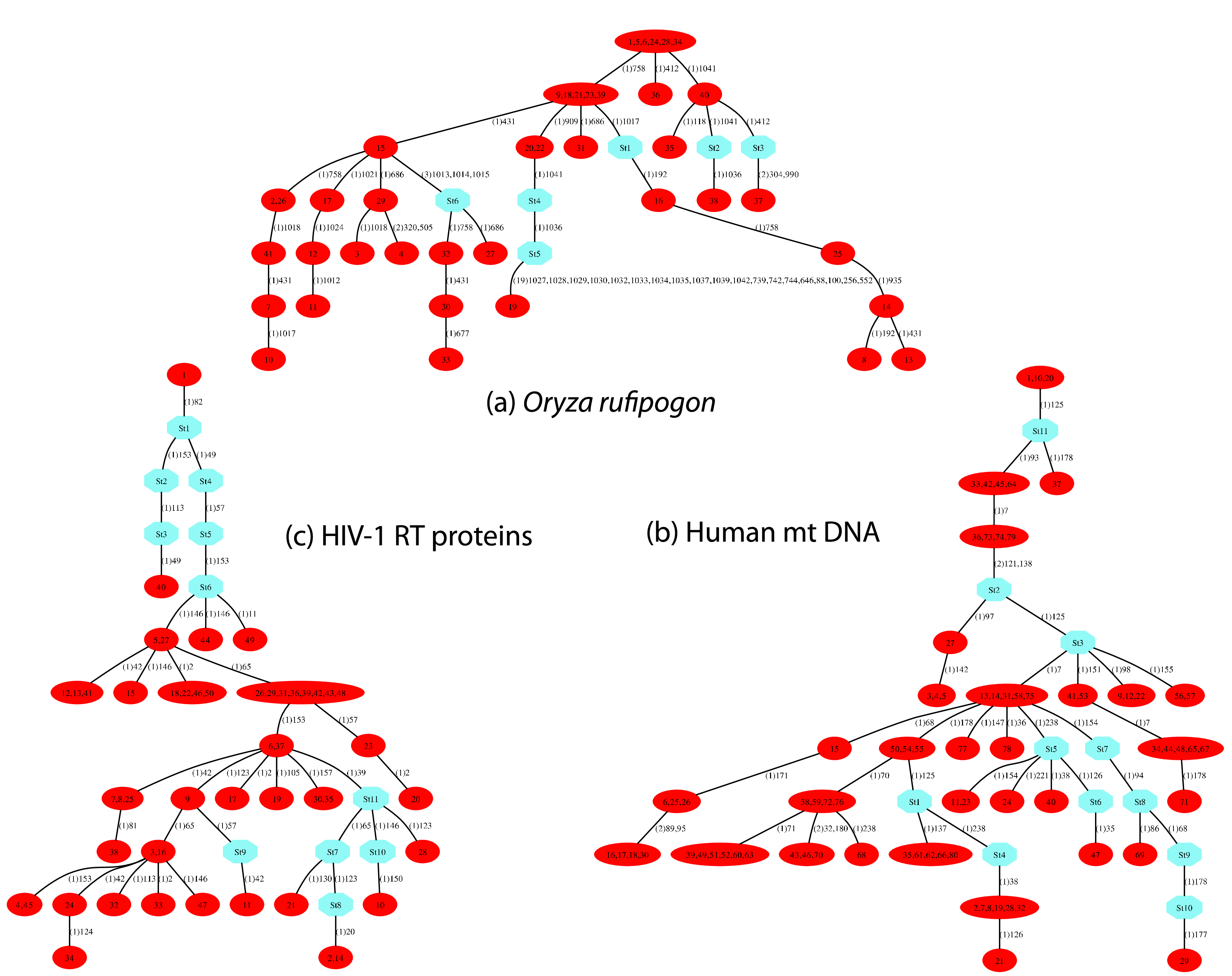}
}
\caption{Most parsimonious phylogenies (a) lhs-1 gene for {\em O.~rufipogon} \cite{rice} (b) Human mt-DNA \cite{Ychr} and (c)  HIV-1 RT proteins \cite{Blast}. Edges are labelled by their lengths in parentheses followed by sites that mutate along that edge. Dark red ovals are input taxa and light blue Steiner nodes.}
\label{Phylogenies}
\end{center}
\end{figure}

For the four larger mitochondrial datasets, Buneman pruning was again highly effective in reducing graph size relative to the complete graph, although the ILP approach eventually proves impractical when Buneman graph sizes grows sufficiently large.  Two of the data sets yielded Buneman graphs of size below 400, resulting in ILP solutions orders of magnitude faster than the heuristics. mt5000a, however, yielded a Buneman graph of over 1,000 nodes, resulting in an ILP that ran more slowly than the heuristics.  mt10000 resulted in a Buneman graph of over 6,000 nodes, leading to an ILP too large to solve.  {\tt pars} was faster than {\tt PAUP*} in all cases, but {\tt PAUP*} found optimal solutions for all three instances we can verify while {\tt pars} found a sub-optimal solution in one instance.

We can thus conclude that the generalized Buneman pruning approach developed here is very effective at reducing problem size, but solving provably to optimality does eventually become impractical for large data sets.  Heuristic approaches remain a practical necessity for such cases even though they cannot guarantee, and do not always deliver, optimality.  Comparison of {\tt PAUP*} to {\tt pars} and the ILP suggests that more aggressive sampling over possible solutions by the heuristics can lead optimality even on very difficult instances but at the cost of generally greatly increased run time on the easy to moderate instances.

\section*{Discussion}
We have presented a new method for finding provably optimal maximum parsimony phylogenies on multi-state characters with weighted state transitions, using integer linear programming.  The method builds on a novel generalization of the Buneman graph for characters with arbitrarily large but finite state sets and for arbitrary weight functions on character transitions.  Although the method has an exponential worst-case performance, empirical results show that it is fast in practice and is a feasible alternative for data sets as large as a few hundred taxa. While there are many efficient heuristics for recontructing maximum parsimony phylogenies, our results cater to the need for provably exact methods that are fast enough to solve the problem for biologically relevant multi-state data sets. Our work could potentially be extended to include more sophisticated integer programming techniques that have been successful in solving large instances of other hard optimization problems, for instance the recent solution of the 85,900-city traveling salesman problem pla85900~\cite{Applegate}. The theoretical contributions of this paper may also prove useful to work on open problems in multi-state MP phylogenetics, to accelerating methods for related objectives, and to sampling among optimal or near-optimal solutions.

\section*{Acknowledgements}
NM would like to thank Ming-Chi Tsai for several useful discussions. This work was supported in part by NSF grant \#0612099.

\end{document}